\newcommand{\blue}{\color{black}}
\newtheorem{lemma}{Lemma}
\newtheorem{theorem}{Theorem}
\begin{document}

	\title{Resilient control under denial-of-service and uncertainty: {\blue An adaptive dynamic programming} approach}
	\author{Weinan~Gao,~\IEEEmembership{Senior Member,~IEEE,}~Zhong-Ping~Jiang,~\IEEEmembership{Fellow,~IEEE},~Tianyou~Chai,~\IEEEmembership{Life Fellow,~IEEE}
		\thanks{W. Gao and T. Chai are with the State Key Laboratory of Synthetical Automation for Process Industries, Northeastern University, Shenyang, China, 110819. Email: \texttt{weinan.gao@nyu.edu, tychai@mail.neu.edu.cn}.}
		\thanks{Z. P. Jiang is with the Department of Electrical and Computer Engineering, Tandon School of Engineering, New York University, Brooklyn, NY 11201, USA. Email: \texttt{zjiang@nyu.edu}.}
		\thanks{The corresponding author is T. Chai.}
		\thanks{The work has been supported in part by the National Natural Science Foundation of China (NSFC) under Grant 62373090 and the U.S. National Science Foundation under Grant CNS-2227153.}
	}

	\markboth{}%
	{}
	%



	\maketitle
	
	\begin{abstract}
	In this paper, a new framework for the resilient control of continuous-time linear systems under denial-of-service (DoS) attacks and system uncertainty is presented. Integrating techniques from reinforcement learning and output regulation theory, it is shown that
	resilient optimal controllers can be learned directly from real-time state and input data collected from the systems subjected to attacks. Sufficient conditions are given under which the closed-loop system remains stable {\blue given any upper bound of DoS attack duration}.
	Simulation results are used to demonstrate the efficacy of the proposed learning-based framework for resilient control under DoS attacks and model uncertainty.
		
	\end{abstract}
	
	\begin{IEEEkeywords}
		Adaptive Dynamic Programming, Output Regulation, Denial-of-Service Attack, Resilient Optimal Control
	\end{IEEEkeywords}

	%
	\IEEEpeerreviewmaketitle

\section{Introduction}
As modern control engineering systems become more and more complex and uncertain, it is important to learn controllers from online data that avoids the reliance on the accurate model knowledge. 
A plausible strategy is to learn the dynamics of control systems first, and then design controllers based on the learned dynamics \cite{Ian2019TRO,yu2017preparing}. 
The success of this strategy usually requires the discrepancy between learned and actual dynamics to be small enough.  
The second strategy is to learn control policy directly from online data instead of learning dynamics \cite{ROTULO2022Auto,Jiang2020FnT,QASEM2023Auto,Jiang2022TNNLS}. Following this strategy, adaptive dynamic programming (ADP) has been used to search for the solution to an optimal decision making problem based on active agent-environment interactions \cite{HuaguangMagazine,Liu2021Survey,Vamvoudakis2020Survey,Powell2007,Lewis2018Survey,Gao2022CTT,Yang2021TSMC}.
Its efficiency has been proved especially in situations where the exact system model is difficult or even impossible to be acquired.
Apart from the unknown system dynamics, modern control systems with ADP algorithms must also deal with nonvanishing disturbance, dynamic uncertainty and cyber attacks.

By the integration of differential game theory and ADP, one can learn the Nash policy to attenuate the effect from nonvanishing disturbances \cite{vrabiegames2013optimal,ODEKUNLE2020}.
In order to further reject disturbance while asymptotically tracking some prescribed reference signal, ADP has been leveraged to achieve output regulation in an adaptive optimal sense \cite{Gao2016TAC,Chen2019,Jiang2020Cyber}.  
However, these methods usually consider static uncertainties in the system based on an assumption that the system order is known exactly. 
In order to relax this assumption, robust ADP (for short, RADP) has been proposed via a combination of ADP and nonlinear small-gain theory \cite{JiangBook2017}.  
Under some mild small-gain condition, one can learn a robust optimal control policy via RADP for unknown control systems in the presence of dynamic uncertainties. 
Nevertheless, most of existing (robust) ADP methods ignore the resiliency of control systems when the control system is subject to malicious attacks.

As a superset of ADP, reinforcement learning has been studied to improve the resilience of control systems while preserving the optimality or at least suboptimality; see \cite{Ilahi2022TAI,HUANG2022273} and references therein. 
However, most of these studies fail to guarantee the asymptotic stability of the closed-loop system. 
In our previous work, we have proposed a model-free method to analyze the resiliency and the stability of control systems in closed-loop with controllers learned by ADP \cite{GAO2022Auto}.
Without knowing exactly the system dynamics, 
one can provide an estimate on the upper bound of cyberattacks for which the control system remains operational. 
However, this method of \cite{GAO2022Auto} may fail to work whenever the cyberattacks go beyond our
estimation.
As cyberattacks become increasingly frequent in modern engineering systems, it is urgent to develop a novel learning-based controller design framework that
addresses resiliency besides the common issues of stability and optimality.

\subsection*{Contributions}
The contributions of this paper are summarized as follows. 
First, we develop an original learning-based resilient optimal controller design framework via ADP for a class of continuous-time linear systems affected by disturbances and denial-of-service (DoS) attacks. 
Interestingly, one can rely on this framework to design a resilient optimal control policy to achieve output regulation (i.e., asymptotic tracking with disturbance rejection) {\blue given any upper bound of DoS attack duration}.
Second, comparing with existing resilient control design approaches \cite{FENG2017Auto,DENG2022Auto,DePersis2015TAC,Tang2021TAC}, the overall control design process under the developed framework does not require the precise knowledge of system dynamics in the state equation. 
Third, this framework is a data-based generalization of the average dwell-time approach in model-based switched system theory
 \cite{Zhai2001,Liberzon2003}.

\subsection*{Structure}
The remainder of this paper is organized as follows. 
Section \ref{sec-prob} formulates an optimal output regulation problem with desired convergence rate and presents a model-based solution to this problem. 
In Section \ref{sec-online}, we propose a learning-based resilient optimal controller design framework and an algorithm based on the proposed framework. We have also rigorously analyzed the resiliency of the closed-loop system. 
Section \ref{sec-simu} contains simulation results on a benchmark adaptive cruise control problem of an autonomous vehicle under DoS attacks by leveraging the proposed algorithm.
Section \ref{sec-conclusion} closes the paper with brief concluding remarks.

\subsection*{Notations}
Throughout this paper,
$\mathbb{R}_+$ denotes the set of nonnegative real numbers, 
and $\mathbb{Z}_+$ the set of nonnegative integers.
$\mathbb{C}^-$ indicates the open left-half complex plane. 
$|\cdot|$ represents the Euclidean norm for vectors and the induced norm for matrices.
The symbol $\otimes$ indicates the Kronecker product operator and ${\rm vec}(A)=[a_1^T,a_2^T,\ldots,a_m^T]^T$, where $a_i\in\mathbb{R}^n$ are the columns of $A\in\mathbb{R}^{n\times m}$. 
When $n=m$, $\sigma(A)$ is the spectrum of $A$.
For an arbitrary column vector $v\in \mathbb{R}^n$,
${\rm vecv}(v)=[v_1^2,v_1 v_2,\ldots,v_1v_n,v_2^2,v_2v_3,\ldots,v_{n-1}v_n,v_n^2]^T\in\mathbb{R}^{\frac{1}{2}n(n+1)}.$
$ {\rm vecs}(P)=[p_{11},2p_{12},\ldots,2p_{1m},p_{22},2p_{23},\ldots,$
$2p_{m-1,m}$
$,p_{mm}]^T\in\mathbb{R}^{\frac{1}{2}m(m+1)}$ for a symmetric matrix $P\in\mathbb{R}^{m\times m}$, and $\lambda_{M}(P)$ (resp. $\lambda_{m}(P)$) denotes the maximum (resp. minimum) eigenvalue of a real symmetric matrix $P$.
$P\succ 0$ (resp. $P\prec 0$) represents that $P$ is positive (resp. negative) definite.

\section{Problem Formulation and Preliminaries} \label{sec-prob}

In this section, we will formulate a linear optimal output regulation problem and recall a model-based solution to this problem. Then, a DoS attack model is established and the concept of resilient optimal control is introduced under DoS attacks. 

\subsection{Linear Optimal Output Regulation}

Consider a class of continuous-time linear systems described by
\begin{align} 
	\dot x(t)      =   &   Ax(t)+Bu(t)+Dv(t), \label{system_state}      \\
	e(t)   =   &   Cx(t)+Fv(t) ,     \label{system_output}
\end{align}
where $u(t)\in \mathbb{R}$ and $x(t)\in \mathbb{R}^{n}$ are the control input and the state, respectively. 
The tracking error $e(t)\in\mathbb{R}$ is the difference between the output $Cx(t)$ and the reference $-Fv(t)$.
The signal $v(t)\in \mathbb{R}^q$ is the exostate of an autonomous system usually referred to as exosystem:
\begin{align}
	\dot v(t)=   &   S v(t).  \label{sys_exo}
\end{align}

The constant matrices $A$, $B$, $C$, $D$, $F$ and $S$ are with proper dimensions.
Throughout this paper, the following assumptions are made on the overall system (\ref{system_state})-(\ref{sys_exo}).

\begin{assumption} \label{assumAB}
	The pair $(A,B)$ is controllable.
\end{assumption}
\begin{assumption} \label{assumRank}
	{\rm rank}$\left[
	\begin{array}{cc}
		A-\lambda I & B \\
		C & 0 \\
	\end{array}
	\right]=n+1$, $\forall\lambda\in\sigma(S)$.
\end{assumption}

The output regulation is achieved if the system (\ref{system_state})-(\ref{sys_exo}) in closed-loop with a controller is exponentially stable at the origin when $v\equiv 0$, and the tracking error $e(t)$ asymptotically converges to $0$ for all initial conditions $x(0)\in\mathbb{R}^n$ and $v(0)\in\mathbb{R}^q$.
Assume that there is no DoS attacks, the closed-loop system with the following controller
\begin{align} \label{feedback-con}
	u(t) = -K x(t) + L v(t), 
\end{align}
has achieved output regulation, where $K$ is a feedback control gain such that ${\blue \sigma(A-BK)\in\mathbb{C}^-,  L=KX+U}$ is a feedforward control gain
{\blue with the matrix $X\in\mathbb{R}^{n\times q}$ and the vector $U\in\mathbb{R}^{1\times q}$ solving the following regulator equations} 
\begin{align}
	XS=& AX+BU+D, \label{eqn-regu1}\\
	0 =&  C X + F. \label{eqn-regu2}
\end{align}

Based on output regulation theory \cite{huangjiebook}, $Xv(t)$ and $Uv(t)$ are where the state and the input should fall in steady-state if one hopes to solve the linear output regulation problem. 
For any $t\in\mathbb{R}_+$,	
$\tilde x(t)= x(t)-Xv(t)$ and $\tilde u(t)=u(t)-U v(t)$ are differences between the actual state/input and the corresponding steady-state components.
Since ${\blue \sigma(A-BK)\in\mathbb{C}^-}$, one can show that $\underset{t\to\infty}{\lim}\tilde x(t)=0$, $\underset{t\to\infty}{\lim}\tilde u(t)=0$, and $\underset{t\to\infty}{\lim}e(t)=0$, which implies that the systems (\ref{system_state})-(\ref{feedback-con}) achieve output regulation.

{\blue
\begin{remark}
	 Under the conditions in the Assumption \ref{assumRank}, there always exists uniquely a pair $(X,U)$ solving regulator equations (\ref{eqn-regu1})-(\ref{eqn-regu2}); see \cite{huangjiebook}. 
\end{remark}
}

In order to take into account the transient performance and the convergence rate of the closed-loop system, an optimal output regulation problem is stated as follows. 
		\begin{problem} \label{prob-LQ} 
			The optimal output regulation problem is solved if one can find a feedback control policy $u^*$ to achieve output regulation. Moreover, the following dynamic programming
				\begin{align}
						&\underset{\tilde u}{\min}\int_{0}^{\infty} e^{2t\lambda^- }( \tilde x^T(t) Q \tilde x(t)+\tilde u^2(t))dt \label{eqn-cost}\\
						&{\rm s.t.\quad}\dot {\tilde x}(t)= A \tilde x(t) + B \tilde u(t)  \label{eqn-open}
					\end{align}
					is solved by $\tilde u^*=u^*-Uv$, where $Q\succ 0$ and $\lambda^-\in\mathbb{R}_+$.
				\end{problem}
				
				By denoting $\tilde x_{\lambda^-}(t)=e^{t\lambda^-}\tilde x(t)$ and 
				$\tilde u_{\lambda^-}(t)=e^{t\lambda^-}\tilde u(t)$, we have 
				\begin{align} \label{eqn-rate}
					\dot {\tilde x}_{\lambda^-}(t) = & e^{t\lambda^-}\dot {\tilde x}(t) + \lambda^- \tilde x_{\lambda^-}(t) \nonumber\\
					=&  e^{t\lambda^-}(A\tilde x(t) + B \tilde u(t)) + \lambda^- \tilde x_{\lambda^-}(t) \nonumber\\
					=&(A+I\lambda^- ) \tilde x_{\lambda^-}(t) + B \tilde u_{\lambda^-}(t)
				\end{align}
				
				By (\ref{eqn-rate}) and the fact that 
				the cost in (\ref{eqn-cost}) is equivalent to 
				\begin{align}
					\int_{0}^{\infty} ( \tilde x_{\lambda^-}^T(t) Q \tilde x_{\lambda^-}(t)+\tilde u_{\lambda^-}^2(t))dt, \nonumber
				\end{align}
			    one can observe that Problem \ref{prob-LQ} is convertible to a linear quadratic regulator problem. The solution to this problem is an optimal feedback control policy
				\begin{align} \label{eqn-optimal2}
					\tilde u^*_{\lambda^-}(t)=-K^* \tilde x_{\lambda^-}(t),
				\end{align}
				which is equivalent to 
				\begin{align} \label{eqn-optimal3}
					u^*(t)=&e^{-t\lambda^-}\tilde u^*_{\lambda^-}(t)+U v(t) \nonumber\\
					=& -K^* e^{-t\lambda^-} \tilde x_{\lambda^-}(t)+U v(t)  \nonumber\\
					=& -K^* \tilde x(t)+U v(t) \nonumber\\
					=& -K^* x(t)+L^* v(t),
				\end{align}
				where the optimal feedforward control gain $L^*$ is 
				\begin{align}
					L^* = K^* X + U.
				\end{align}
				
				The optimal control gain $K^*$ is
				\begin{align}
					K^*= B^T P^*.
				\end{align}
				
				{\blue Based on the Popov-Belevitch-Hautus (PBH) test, the Assumption 1, and $Q\succ 0$, one can obtain that the pair $(A-I\lambda^-,B)$ is controllable and the pair $(A-I\lambda^-,\sqrt{Q})$ is observable for any $\lambda^-\in\mathbb{R}_+$.
				From linear optimal control theory, the real symmetric matrix $P^*\succ 0$ is the unique solution to the following algebraic Riccati equation (ARE)}
				\begin{align} \label{eqn-ARE2}
					(A+I\lambda^- ) P^*+P^* (A+I\lambda^- )+Q-P^* BB^T  P^*=0.
				\end{align}

\subsection{DoS Attack Model and Resilient Optimal Control}
The DoS attack model considered in this paper can be described as follows. Let $\mathcal{I}_{l}=[h_l,h_l+\tau_l)$ represent the $l$-th  ($l\in\mathbb{Z}_+$) DoS attack interval.
$h_l$, $h_l+\tau_l$ and $\tau_l$ represent the start time, end time and the length of the $l$-th DoS attack.
Based on the following Assumptions \ref{assum_A_DoS_F}-\ref{assum_A_DoS_1} that are somehow introduced in \cite{DePersis2015TAC,Tang2021TAC,FENG2017Auto,DENG2022Auto,WANG2022Auto}, we define $\Pi_D(t_a,t_b):=\ (t_a,t_b) \bigcap \bigcup\limits_{l=0}^{\infty}\mathcal{I}_{l}$ as the set where the system is under DoS attacks during the interval $[t_a,t_b]$, and denote $\Pi_N(t_a,t_b):=[t_a,t_b]\setminus\Pi_D(t_a,t_b)$ as the normal communication set.

\begin{assumption} (DoS Frequency) \label{assum_A_DoS_F} 
	There exist constants $\eta>1$ and $\tau_D>0$ such that 
	\begin{align}
		n(t_a,t_b)\leq \eta+\frac{t_b-t_a}{\tau_D},~\forall~ t_b> t_a\geq0,
	\end{align}
	where $n(t_a,t_b)$ denotes the number of DoS off/on transitions during the interval $[t_a,t_b]$.
\end{assumption}

\begin{assumption}\label{assum_A_DoS_1} 
	(DoS Duration)
	There exist constants $T>1$ and $\kappa>0$ such that
	\begin{align*}
		\left\vert\Pi_D(t_a, t_b)\right\vert \leq \kappa+\frac{t_b-t_a}{T} ,~\forall~ t_b> t_a\geq0,
	\end{align*}
	where $\vert\Pi_D(t_a, t_b)\vert$ is the Lebesgue measure \cite{Rudin1953} of the set $\Pi_D(t_a, t_b)$.
\end{assumption}



Similar to robust optimal with respect to systems with dynamic uncertainties \cite{JiangBook2017}, we define the resilient optimality of a control policy with respect to systems under DoS attacks.
\begin{definition}
A control policy $u_{ro}^*$ is said to be resilient optimal regarding to systems (\ref{system_state})-(\ref{sys_exo}) 
if both following properties hold.
\begin{enumerate}
	\item The optimal output regulation problem, Problem \ref{prob-LQ}, is solved by $u_{ro}^*$ in the absence of DoS attacks, i.e., $|\Pi_D(0,t_b)|=0$ for any $t_b\in\mathbb{R}_+$.
	\item The closed-loop systems (\ref{system_state})-(\ref{sys_exo}) with $u_{ro}^*$ achieve output regulation in the presence of DoS attacks satisfying Assumptions \ref{assum_A_DoS_F}-\ref{assum_A_DoS_1}.
\end{enumerate}
\end{definition}

\section{Main Results} \label{sec-online}
					
		In this section, we will establish a learning-based resilient optimal control framework for control systems subject to DoS attacks.
		Based on this framework, we develop a learning algorithm to find a resilient optimal control policy {\blue given any upper bound of DoS attack duration}. We further show that the closed-loop system achieves output regulation. 
					
					\subsection{Learning-based Controller Design Under DoS}  \label{sec-single}
		 A system with DoS attacks is essentially an autonomously switched system where we have no direct control over the switching mechanism \cite{Liberzon2003}. We will switch between two linear subsystems, one is controllable (when the communication is normal) while the other one is an autonomous system but not necessarily stable (when the communication is denied).
					As shown in \cite{Zhai2001}, a sufficient condition to ensure the exponential stability of this kind of switched systems is that the total activation time ratio between an exponentially stable linear subsystem and an unstable linear subsystem is smaller than a criterion related to the eigenvalues of each subsystem and that the average dwell time is sufficiently large. 
															
					\begin{figure}[t]
						\centering
						\includegraphics[width=\linewidth]{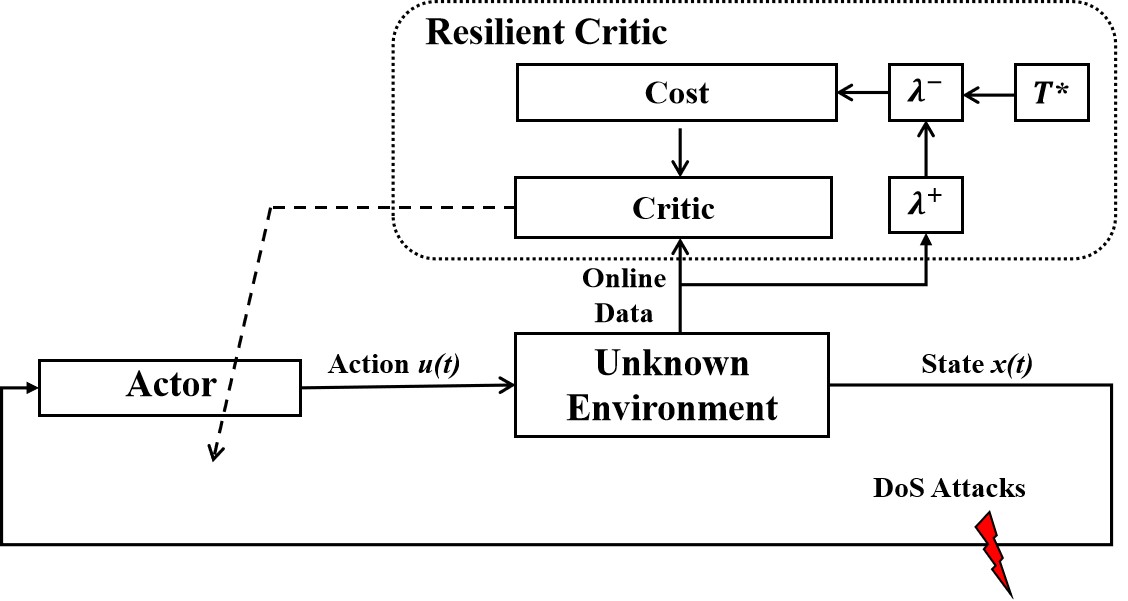}
						\caption{Learning-based resilient optimal control framework}
						\label{fig:framework}
					\end{figure} 
					
					Based on above-mentioned fact, we develop a learning-based resilient optimal control framework, see Fig. \ref{fig:framework}.
					The outline of this framework can be understood as follows.
					Through state and input data, we first learn a $\lambda^+\in\mathbb{R}_+$ such that {\blue $\sigma (A-\lambda^+I)\in\mathbb{C}^-$}.
					As the system is open-loop when it is under DoS attacks, $\lambda^+$ can be used to quantify the divergence rate of the system under attack.
					Then, we determine the convergence rate parameter $\lambda^-$ for the system with normal communication {\blue based on DoS duration criterion $T$}, and choose the cost (\ref{eqn-cost}) based on $\lambda^-$.
					Finally, we develop a resilient optimal controller based on actor-critic structure and policy iteration to satisfy this sufficient condition without the exact knowledge of system matrices $A, B$ and $D$.
					
					To implement this framework, we begin with solving the following Lyapunov equation  
					\begin{align} \label{lyap-open}
						(A-\lambda^+I)^T P^+ + P^+ (A-\lambda^+I) = -\epsilon I
					\end{align}
					using input and state data, where $\epsilon$ is a small but positive constant. Notice that one can always solve the (unique) real symmetric matrix $P^+\succ 0$ from (\ref{lyap-open}) once {\blue $\sigma (A-\lambda^+I)\in\mathbb{C}^-$}.
					Since the state matrix $A$ is unknown, one cannot solve (\ref{lyap-open}) directly. 
					We propose a way to calculate $P^+$ through input and state data.
					From (\ref{system_state}), for any symmetric matrix $P\in\mathbb{R}^{n\times n}$, we have
					\begin{align} \label{eqn-xPx}
						& x^T(t_1) P x(t_1)-  x^T(t_0)P x(t_0) \nonumber\\
						=&\int_{t_0}^{t_1}\left[x^T(A^TP+PA)x+2x^T P B u  + 2 x^T P D v\right.] d\tau.
					\end{align}
				
					Suppose there exists a sequence $\{t_i\}_{i=0}^\infty$ such that the communications are allowed in all of the following intervals $[t_0,t_1], [t_2,t_3], [t_4,t_5], \cdots$. 
					It is feasible to find this sequence through detecting attacks; see, e.g., \cite{Zhou2021CYB}. 
					For any two vectors $a,b$ and a sufficiently large number $s>0$, define
					\begin{align} 
						\delta_{a}=& [{\rm vecv}(a(t_{1}))-{\rm vecv}(a(t_{0})),{\rm vecv}(a(t_{3}))
						-{\rm vecv}(a(t_{2})), \nonumber\\
						& \ldots,{\rm vecv}(a(t_{2s+1}))-{\rm vecv}(a(t_{2s}))]^T, \nonumber\\
						\Gamma_{a,b}=& [\int_{t_{0}}^{t_{1}}a\otimes b d\tau,\int_{t_{2}}^{t_{3}} a\otimes b d\tau,\ldots ,\int_{t_{2s}}^{t_{2s+1}} a\otimes b d\tau]^T. \nonumber\\
						\Lambda_{a} = & [\int_{t_{0}}^{t_{1}}{\rm vecv}(a) d\tau,\int_{t_{2}}^{t_{3}} {\rm vecv}(a) d\tau,\ldots ,\int_{t_{2s}}^{t_{2s+1}} {\rm vecv}(a) d\tau]^T.
						\nonumber
					\end{align}
					
					The equations (\ref{lyap-open}) and (\ref{eqn-xPx}) imply the following linear equation
					\begin{equation} \label{myPolicy Evaluation3}
						\Psi \left[
						\begin{array}{c}
							{\rm vecs}(P^+) \\
							{\rm vec}(B^T P^+) \\
							{\rm vec}(D^T P^+) \\
						\end{array}
						\right] = \Phi,
					\end{equation}
					where
					$\Psi=[\delta_{x}-2\lambda^+\Lambda_{x},-2{\Gamma_{x,u}},-2{\Gamma_{x,v}}], 
					\Phi=-\epsilon{\Gamma_{x,x}}.$

					Then, we choose the desired convergence rate parameter $\lambda^-$ of the closed-loop system with normal communication and obtain the quadruple $(K^*, P^*,X,U)$ based on online data. 
					In order to find the pair $(X,U)$ which is a solution to the regulator equations (\ref{eqn-regu1})-(\ref{eqn-regu2}), we define a sequence of trial matrices $X_i\in\mathbb{R}^{n\times q}$ for $i=0,1,\cdots, (n-r)q+1$, where $X_0=0$, $CX_1=-F$. For $i=2,3,\cdots, (n-r)q+1$, all the vectors ${\rm vec}(X_i)$ form a basis for the kernel $\ker(I_q\otimes C)$. It has been shown in \cite{Gao2016TAC} that the solution $X$ is always a linear combination of these trial matrices.
					

					The idea of policy iteration is to implement both policy evaluation 
					\begin{align}
						0=&A_k^T P_k+P_kA_k+Q +K_k^TK_k+2\lambda^- P_k \label{lyap} 		
					\end{align}
					and policy improvement 
					\begin{align}
						K_{k+1}=& B^T P_k \label{Klein-gain}
					\end{align}
					using online data by iterations, where $A_k=A-BK_k$.
					
					Defining $x_{i}(t)=x(t)-X_i v(t)$, along the trajectories of (\ref{system_state})-(\ref{sys_exo}), we have
					\begin{align}
						\dot x_{i}=& \dot x - X_i Sv \nonumber\\
						=&Ax_{i}+Bu+(D-\mathcal{S}(X_i))v \nonumber\\
						=&{\blue A_k x_i+B(u+K_kx_i)+(D-\mathcal{S}(X_i))v }
						\label{closed_+1}
					\end{align}
					where $\mathcal{S}(X_i)=X_iS-AX_i$.
					
					By equations (\ref{eqn-xPx}), (\ref{lyap}) and (\ref{Klein-gain}), we have
					\begin{align} \label{eqn-adp}
						& x_i^T(t_1)P_k x(t_1)-  x_i^T(t_0)P_k x(t_0) \nonumber\\
						=&\int_{t_0}^{t_1}\left[x_i^T(A_k^TP_k+P_kA_k)x_i+2x_i^T P_k B {\blue(u+K_kx_i)} \right. \nonumber\\
						& \left. + 2 x_i^T P_k (D-\mathcal{S}(X_i))v\right.] d\tau \nonumber\\ 
						=& \int_{t_0}^{t_1}\left[-x_i^T(Q+K_k^T RK_k+2\lambda^-P_k)x_i \right. \nonumber\\
						& \left. {\blue +2x_i^T K_{k+1}^T (u+K_kx_i)} + 2 x_i^T P_k (D-\mathcal{S}(X_i))v\right.] d\tau.  
					\end{align}

					The equation (\ref{eqn-adp}) implies the following linear equation
					\begin{equation} \label{myPolicy Evaluation2}
						\Psi_{ik} \left[
						\begin{array}{c}
							{\rm vecs}(P_k) \\
							{\rm vec}(K_{k+1}) \\
							{\rm vec}\left( (D-\mathcal{S}(X_i))^T P_k\right) \\
						\end{array}
						\right] = \Phi_{ik},
					\end{equation}
					where
					\begin{align}
						\Psi_{ik}= & [\delta_{x_i}+2\lambda^-\Lambda_{x_i},-2{\Gamma_{x_i,x_i}}(I\otimes K_k^T)-2{\Gamma_{x_i, u}},-2{\Gamma_{x_i,v}}], \nonumber\\ 
						\Phi_{ik}=& -{\Gamma_{x_i,x_i}} {\rm vec}(Q+K_k^TK_k). \nonumber
					\end{align}
					
					A sufficient condition to ensure equations (\ref{myPolicy Evaluation3}) and (\ref{myPolicy Evaluation2}) having a unique solution is that matrices $\Psi$ and $\Psi_{i,k}$ are full-column rank. It is obvious that matrix $\Psi_{i,k}$ depends on the control gain $K_k$.
					In the following lemma, we will show that the uniqueness of solution to equations (\ref{myPolicy Evaluation3}) and (\ref{myPolicy Evaluation2}) can be guaranteed based on some rank condition that purely depends on the online collection. 
					
					\begin{lemma} \label{lemma-rank}
						For $i=0,1,2,\cdots,(n-r)q+1$, if there exists a $s^*\in\mathbb{Z}_+$ such that for all $s>s^*$,
					\begin{align} \label{fullrank}
						{\rm rank}([\Gamma_{x_i, x_i}, \Gamma_{x_i, u}, \Gamma_{x_i, v}])=\frac{(n+1)n}{2}+(m+q)n.
					\end{align}
						then, the following properties hold
						\begin{enumerate}
							\item $\Psi$ has full column rank for any $\lambda^+$ such that {\blue $\sigma (A-\lambda^+I)\in\mathbb{C}^-$}.
							\item $\Psi_{ik}$ has column full rank for any $k\in\mathbb{Z}_+$.
						\end{enumerate}
					\end{lemma}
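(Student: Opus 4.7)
My plan is to show that both $\Psi$ and $\Psi_{ik}$ factor as the product of the data matrix $[\Gamma_{x_i,x_i},\Gamma_{x_i,u},\Gamma_{x_i,v}]$ with a data-independent linear operator, so that the query ``does $\Psi_{ik} v=0$ force $v=0$?'' reduces, after exploiting the rank condition (\ref{fullrank}), to a homogeneous Lyapunov equation whose only symmetric solution is zero.

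\emph{Step 1 (key integral identity).} For any symmetric $Y\in\mathbb{R}^{n\times n}$ and any $\mu\in\mathbb{R}$, I would differentiate $x_i^T(t)Yx_i(t)$ along (\ref{closed_+1}), add $2\mu x_i^T Y x_i$, integrate piecewise over $[t_{2j},t_{2j+1}]$, and stack the results to obtain
\begin{equation*}
(\delta_{x_i}+2\mu\Lambda_{x_i}){\rm vecs}(Y)=\Gamma_{x_i,x_i}{\rm vec}\bigl((A+\mu I)^T Y+Y(A+\mu I)\bigr)+2\Gamma_{x_i,u}{\rm vec}(B^T Y)+2\Gamma_{x_i,v}{\rm vec}(\mathcal{D}_i^T Y),
\end{equation*}
where $\mathcal{D}_i=D-\mathcal{S}(X_i)$. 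Setting $\mu=\lambda^-$ rewrites the first block of $\Psi_{ik}$; setting $i=0$ (so $x_0=x$, $\mathcal{D}_0=D$) and $\mu=-\lambda^+$ rewrites the first block of $\Psi$.

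\emph{Step 2 (kernel of the data matrix).} The column dimension of $[\Gamma_{x_i,x_i},\Gamma_{x_i,u},\Gamma_{x_i,v}]$ is $n^2+(m+q)n$, so (\ref{fullrank}) gives nullity exactly $\tfrac{n(n-1)}{2}$. Because $x_i^T Y x_i$ depends only on the symmetric part of $Y$, the $\tfrac{n(n-1)}{2}$-dimensional space $\{({\rm vec}(Z),0,0):Z+Z^T=0\}$ already lies in the kernel; by dimension count it is the entire null space. In particular, $\Gamma_{x_i,u}$ and $\Gamma_{x_i,v}$ have full column rank, and their ranges are linearly independent from the symmetric-image range of $\Gamma_{x_i,x_i}$.

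\emph{Step 3 (reduction to a Lyapunov equation).} Suppose $\Psi_{ik}[v_1^T,v_2^T,v_3^T]^T=0$ with $v_1={\rm vecs}(Y_1)$, $v_2={\rm vec}(Y_2)$, $v_3={\rm vec}(Y_3)$. Applying the Step~1 identity with $Y=Y_1$ and $\mu=\lambda^-$ recasts the condition as
\begin{equation*}
\Gamma_{x_i,x_i}{\rm vec}(M)+2\Gamma_{x_i,u}\bigl({\rm vec}(B^T Y_1)-v_2\bigr)+2\Gamma_{x_i,v}\bigl({\rm vec}(\mathcal{D}_i^T Y_1)-v_3\bigr)=0,
\end{equation*}
with $M=(A+\lambda^- I)^T Y_1+Y_1(A+\lambda^- I)-2K_k^T Y_2$. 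Step~2 forces the three block coefficients to lie in its kernel, yielding $Y_2=B^T Y_1$, $Y_3=\mathcal{D}_i^T Y_1$, and $M+M^T=0$. Substituting $Y_2=B^T Y_1$ into $M+M^T=0$ and using symmetry of $Y_1$ collapses the skew-symmetry requirement to the homogeneous Lyapunov equation
\begin{equation*}
(A_k+\lambda^- I)^T Y_1+Y_1(A_k+\lambda^- I)=0.
\end{equation*}

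\emph{Step 4 (vanishing and conclusion).} The policy iteration (\ref{lyap})--(\ref{Klein-gain}), initialized from a stabilizing $K_0$ with $\sigma(A-BK_0+\lambda^- I)\subset\mathbb{C}^-$, preserves $\sigma(A_k+\lambda^- I)\subset\mathbb{C}^-$ for every $k$ by Kleinman's monotonicity argument. The Lyapunov equation in Step~3 therefore admits only $Y_1=0$, which in turn gives $Y_2=Y_3=0$ and so $v=0$, proving part~(2). Part~(1) follows by the same three steps with $i=0$ and $\mu=-\lambda^+$: the resulting equation $(A-\lambda^+ I)^T Y_1+Y_1(A-\lambda^+ I)=0$ has only $Y_1=0$ under the standing hypothesis $\sigma(A-\lambda^+ I)\subset\mathbb{C}^-$.

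The main obstacle I anticipate is in Step~4: the Hurwitz property $\sigma(A_k+\lambda^- I)\subset\mathbb{C}^-$ does not follow from (\ref{fullrank}) alone, so the lemma is implicitly conditioned on starting the policy iteration from a stabilizing initial gain. The Kronecker-product bookkeeping in Step~1---especially reconciling the conventions of $\Gamma_{x_i,u}$ with the asymmetry introduced by the $K_k^T Y_2$ term in Step~3---also requires care but is otherwise routine.
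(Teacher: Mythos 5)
Your proof is correct and follows essentially the same route as the paper: the data-integral identity together with the rank condition reduces the kernel of $\Psi$ (resp. $\Psi_{ik}$) to a homogeneous Lyapunov equation in $A-\lambda^+I$ (resp. $A_k+\lambda^- I$), whose Hurwitz property forces the solution to vanish --- the paper argues part (1) equivalently via uniqueness of solutions of $\Psi\Xi=\Phi$ (with the $-\epsilon I$ Lyapunov equation) and simply cites \cite{Gao2016TAC} for part (2), which is exactly the argument you spell out. Your caveat that part (2) implicitly requires $\sigma(A_k+\lambda^- I)\subset\mathbb{C}^-$ is consistent with the paper's setting, since Algorithm \ref{algor-PI} starts from an admissible $K_0$ and Kleinman's iteration preserves this stability property.
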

		 
					%
					
					\begin{proof}
						To show the property 1), let us assume there exists a vector $\Xi$ satisfying the following equation 
						\begin{align} \label{eqn-PsiXi}
							\Psi \Xi = \Phi 
						\end{align}
						where $\Xi = [({\rm vecs}(W))^T, ({\rm vec}(Y))^T, ({\rm vec}(Z))^T]^T$ with matrices $W=W^T\in\mathbb{R}^{n\times n}$, $Y\in\mathbb{R}^{m\times n}$ and $Z\in\mathbb{R}^{q\times n}$.
						The equation (\ref{eqn-PsiXi}) can be converted as follows based on (\ref{eqn-xPx})
						\begin{align}
							\Lambda_{x} {\rm vecs}(\Omega_1) + 2 \Gamma_{x,u} {\rm vec}(\Omega_2) + 2 \Gamma_{x,v} {\rm vec} (\Omega_3) = 0
						\end{align}
						where 
						\begin{align}
							\Omega_1 = & (A-\lambda^+I)^T W+W(A-\lambda^+I)+\epsilon I, \nonumber\\
							\Omega_2 = & B^T W - Y, \nonumber\\
							\Omega_3 = & D^T W - Z. \nonumber
						\end{align}
					
						When $i=0$, the rank condition (\ref{fullrank}) implies that 
							the matrix $[\Lambda_{x}, \Gamma_{x, u}, \Gamma_{x, v}]$ is full column rank,
							whereby one can show that $Y=B^TW$, $Z=D^TW$ and $W$ solves the following Lyapunov equation 
						\begin{align} \label{eqn-lyapW}
							(A-\lambda^+I)^T W+W(A-\lambda^+I)=-\epsilon I.
						\end{align}
					
						Since {\blue $\sigma (A-\lambda^+I)\in\mathbb{C}^-$}, the matrix $W$ is uniquely determined. Thus, the vector $\Xi$ is the unique solution to (\ref{eqn-PsiXi}) and $\Psi$ is full column rank. The proof of property 1) is thus completed. To show property 2), one can refer to \cite{Gao2016TAC} for mathematical proof.
					\end{proof}
				
					Based on Lemma \ref{lemma-rank}, we further show a necessary and sufficient condition related to the stability of matrix $A-\lambda^+I$ in the following lemma so that its stability can be determined without the precise knowledge of system matrices in the Algorithm \ref{algor-PI}, an off-policy learning algorithm to be developed below.
					
					\begin{lemma} \label{lemma-eqv}
						Under the rank condition (\ref{fullrank}), the matrix $A-\lambda^+I$ is Hurwitz if and only if the following properties hold
						\begin{enumerate}
							\item The matrix $\Psi$ is full column rank.
							\item The symmetric matrix $P^+$ solved by (\ref{myPolicy Evaluation3}) is positive definite.
						\end{enumerate}
					\end{lemma}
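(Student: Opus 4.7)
The plan is to handle the two directions of the equivalence separately, reusing the identities derived in the proof of Lemma \ref{lemma-rank}.

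For the ``only if'' direction, I would argue as follows. Assume $A-\lambda^+ I$ is Hurwitz. Property 1 is already given by Lemma \ref{lemma-rank}, part 1). For property 2, the standard Lyapunov theorem guarantees a unique solution $P^+$ to (\ref{lyap-open}) and, because $-\epsilon I \prec 0$, that solution satisfies $P^+\succ 0$. Applying the integral identity (\ref{eqn-xPx}) with $P=P^+$ along each interval $[t_{2j},t_{2j+1}]$ and stacking, this $P^+$ together with $B^TP^+$ and $D^TP^+$ satisfies (\ref{myPolicy Evaluation3}); uniqueness from property 1 identifies it as the matrix $P^+$ ``solved by (\ref{myPolicy Evaluation3}),'' which is therefore positive definite.

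For the ``if'' direction, which is the more delicate one, I would take any triple $[{\rm vecs}(W)^T,{\rm vec}(Y)^T,{\rm vec}(Z)^T]^T$ satisfying (\ref{myPolicy Evaluation3}) and reuse the reduction carried out inside the proof of Lemma \ref{lemma-rank}. Subtracting the expansion (\ref{eqn-xPx}) applied to $W$ from the right-hand side $\Phi=-\epsilon \Gamma_{x,x}$ leads, after the same algebraic manipulations, to
\begin{align*}
\Lambda_{x}\,{\rm vecs}(\Omega_{1}) + 2\,\Gamma_{x,u}\,{\rm vec}(\Omega_{2}) + 2\,\Gamma_{x,v}\,{\rm vec}(\Omega_{3}) = 0,
\end{align*}
with $\Omega_1=(A-\lambda^+ I)^{T}W+W(A-\lambda^+ I)+\epsilon I$, $\Omega_2=B^{T}W-Y$ and $\Omega_3=D^{T}W-Z$. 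The rank condition (\ref{fullrank}) (for $i=0$, which corresponds to $x_0=x$) forces $[\Lambda_x,\Gamma_{x,u},\Gamma_{x,v}]$ to be full column rank, so each $\Omega_{j}$ vanishes. In particular $W=P^+$ satisfies the Lyapunov equation (\ref{lyap-open}). Then the converse Lyapunov theorem, combined with property 2 ($P^+\succ 0$) and $-\epsilon I\prec 0$, yields that $A-\lambda^+ I$ is Hurwitz.

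The main obstacle I anticipate is the ``if'' direction, because a priori a solution of the data equation (\ref{myPolicy Evaluation3}) need not come from a Lyapunov-type identity: one must verify that the rank condition (\ref{fullrank}) is strong enough to guarantee that the three blocks $\Omega_1,\Omega_2,\Omega_3$ separately vanish, rather than only their aggregated linear combination. Once this is established, the converse Lyapunov argument is routine, and the forward direction is immediate from Lemma \ref{lemma-rank} and the uniqueness of the Lyapunov solution.
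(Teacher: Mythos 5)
Your proposal is correct and follows essentially the same route as the paper: the Hurwitz-implies-properties direction via existence/uniqueness of the Lyapunov solution and the data identity (\ref{eqn-xPx}), and the converse by reducing any solution of (\ref{myPolicy Evaluation3}) to the relation involving $\Omega_1,\Omega_2,\Omega_3$ (as in the proof of Lemma \ref{lemma-rank}), forcing $P^+$ to solve (\ref{lyap-open}) and invoking Lyapunov stability theory. (Minor quibble: the final step is the direct Lyapunov theorem, not the converse one, but the argument is the same.)
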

				
					\begin{proof}
						To prove the sufficiency, suppose $A-\lambda^+I$ is Hurwitz, then the property 1) holds based on Lemma \ref{lemma-rank} and rank condition (\ref{fullrank}). We further observe that $P^+$ can be solved uniquely by (\ref{myPolicy Evaluation3}) when $\Psi$ is full column rank. On the other hand, the matrix $P^+\succ 0$ solving the Lyapunov equation (\ref{lyap-open}) satisfies  (\ref{myPolicy Evaluation3}), which immediately shows the property 2).
						
						To prove the necessity, suppose both properties hold. From equations (\ref{eqn-PsiXi})-(\ref{eqn-lyapW}), the matrix $P^+\succ 0$ solves the Lyapunov equation (\ref{lyap-open}). By Lyapunov stability theory, $A-\lambda^+I$ must be a Hurwitz matrix. The proof is thus completed. 
					\end{proof}
					
					Now, we are ready to present an {\blue off-policy} resilient PI Algorithm \ref{algor-PI}, which essentially works through two phases. 
					In the first phase (Steps 3-7), we can use Lemma \ref{lemma-eqv} to ensure that one can always find a $\lambda^+$ such that $A-\lambda^+I$ is Hurwitz via online collection.
					Based on $\lambda^+$ and DoS duration criterion $T$, we set a convergence rate parameter $ \lambda^-=\frac{2\lambda^+}{T-1}$ for the closed-loop system with normal communication, and
					approximate the resilient optimal control gain $K^*$ and the value $P^*$, and the solution to regulator equations in the second phase (step {\blue 8-19}). 
					
					\begin{algorithm} 
						\caption{{\blue Off-Policy} Resilient Policy Iteration Algorithm} \label{algor-PI}
						\begin{algorithmic} [1]
							\State {\blue Select a small $c>0$, and two constants $\lambda^+>0,a>1$}. Apply any locally essentially bounded input $u$ on $[t_0,t_{2s+1}]$ such that (\ref{fullrank}) holds. 
							\State $i\gets 0$, $P^+\gets 0$.
							\Repeat	
							\State ${\blue \lambda^+\gets a\lambda^+}$	
							\State \textbf{if} $\Psi$ is full column rank \textbf{then}
							\State Solve $P^+$ from (\ref{myPolicy Evaluation3}) \textbf{end if}
							\Until $P^+\succ 0$  
							\State  ${\blue k\gets -1, P_k\gets 0}$, $\lambda^-\gets\frac{2\lambda^+}{T-1}.$ {\blue Choose an admissible $K_0$.}
							\Repeat		
							\State $k\gets k+1$
							\State Solve $(P_k,K_{k+1})$ from (\ref{myPolicy Evaluation2})
							\Until  $|P_k-P_{k-1}|<c$ \textbf{\blue and }${\blue k>0}$
							\State ${\blue k^*\gets k}$
							\Repeat 
							\State Solve $\mathcal{S}(X_i)$ from (\ref{myPolicy Evaluation2})
							\State $i\gets i+1$
							\Until $i=(n-r)q+1$
							\State Solve $(X,U)$ from $\mathcal{S}(X_i)$. ${\blue L_{k^*}=K_{k^*}X+U}$ 
							\State Return the resilient near-optimal control policy
							\begin{align} \label{eqn-learn-con}
								{\blue u_{k^*}(t)= \left\{  \begin{matrix}
									-K_{k^*}x(t)+L_{k^*}v(t), & t\notin \mathcal{I}_s \\ Uv(t), & t\in \mathcal{I}_s
								\end{matrix}   \right.} 
							\end{align}
						\end{algorithmic}
					\end{algorithm}

					\begin{remark}
						Given $S(X_i)$ for $i=0,1,2,\cdots,(n-r)q+1$, one can solve $(X,U)$ directly by tackling a convex optimization problem without relying on the knowledge of system matrices; one can refer to \cite[Remark 5]{Gao2016TAC} for more details.
					\end{remark}
					
					{\blue \begin{remark}
						A control gain $K_0$ is said to be admissible if it 1) is stabilizing and 2) makes the cost (\ref{eqn-cost}) of the system (\ref{eqn-open}) with the controller $\tilde u(t)=-K_0\tilde\xi(t)$ to be finite. Since the system (\ref{eqn-open}) is linear and the cost (\ref{eqn-cost}) is quadratic, one can show that any $K_0$ satisfying $\sigma(A+I\lambda^- - BK_0)\in\mathbb{C}^-$ is admissible. If an admissible $K_0$ is not available, one can use the idea of hybrid iteration \cite{QASEM2023Auto}, which is a combination of value iteration and PI, for learning.  
					\end{remark}
				}

								{\blue We will develop the following theorem to show that the system in closed-loop with 
									\begin{align} \label{eqn-control-uk}
									u_{k}(t)= \left\{  \begin{matrix}
											-K_{k}x(t)+L_{k}v(t), & t\notin \mathcal{I}_s \\ Uv(t), & t\in \mathcal{I}_s
										\end{matrix}   \right. 
									\end{align}
								}is always resilient to DoS attacks if DoS frequency criterion $\tau_D$ is sufficiently large, in other words, if DoS frequency is small enough, {\blue where the feedforward control gain is $L_k=K_kX+U$. The feedback control gain $K_k$ is obtained at the iteration $k>0$ and the pair $(X,U)$ is obtained at the step 18 of the Algorithm \ref{algor-PI}. Without loss of generality, the resilient near-optimal control policy (\ref{eqn-learn-con}) can be rewritten in the form of (\ref{eqn-control-uk}) by letting $k=k^*$. }
								\begin{theorem} \label{theo-res}
								Under Assumptions \ref{assumAB}-\ref{assum_A_DoS_1}, {\blue given any DoS duration criterion $T$},
								 the system (\ref{system_state})-(\ref{sys_exo}) in closed-loop with the learned controller {\blue (\ref{eqn-control-uk})} achieves output regulation if the DoS frequency criterion $\tau_D$ satisfies
									\begin{align} \label{eqn-DoS-T}
										\tau_D \geq \log\left({\blue\frac{\lambda_{M}(P_{k})\lambda_{M}(P^+)}{\lambda_{m}(P_{k})\lambda_{m}(P^+)}}\right){\blue\frac{T}{ 2(T-1)\lambda^-}}:= {\blue \tau_D^k}.
									\end{align}
								\end{theorem}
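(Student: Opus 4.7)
The plan is to recast the closed-loop system in error coordinates, analyze it as a switched system with an exponentially contracting normal mode and an exponentially expanding DoS mode, and then apply an average-dwell-time patching argument tuned to the design relation $\lambda^- = 2\lambda^+/(T-1)$. I would first introduce $\tilde x(t) = x(t) - Xv(t)$ where $(X,U)$ solves the regulator equations (\ref{eqn-regu1})-(\ref{eqn-regu2}). Using $L_k = K_k X + U$, direct substitution into (\ref{eqn-control-uk}) together with the identity $XS = AX + BU + D$ yields $\dot{\tilde x} = (A - BK_k)\tilde x$ on the normal set $\Pi_N$ and $\dot{\tilde x} = A\tilde x$ on the attack set $\Pi_D$; in the attack mode the feedforward term $Uv$ still exactly cancels the $(BU+D)v$ drift so that the error dynamics become purely open-loop.

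I would then associate two quadratic Lyapunov functions with the two modes: $V_N(\tilde x) = \tilde x^T P_k \tilde x$ on $\Pi_N$ and $V_D(\tilde x) = \tilde x^T P^+ \tilde x$ on $\Pi_D$. From the policy-evaluation equation (\ref{lyap}) together with admissibility of $K_k$, I get $\dot V_N \le -2\lambda^- V_N$; from (\ref{lyap-open}) I get $\dot V_D \le 2\lambda^+ V_D$. The Rayleigh bound $\lambda_m(P)|\tilde x|^2 \le \tilde x^T P \tilde x \le \lambda_M(P)|\tilde x|^2$ controls the jumps across a switching instant: $V_D \le (\lambda_M(P^+)/\lambda_m(P_k))\,V_N$ when entering an attack and $V_N \le (\lambda_M(P_k)/\lambda_m(P^+))\,V_D$ when leaving one. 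Because these two factors alternate along the switching sequence, their product over any two consecutive transitions equals $\mu := \lambda_M(P_k)\lambda_M(P^+)/(\lambda_m(P_k)\lambda_m(P^+))$, independently of the starting mode.

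Iterating these per-interval and per-jump estimates along a generic trajectory gives the switched-Lyapunov bound
\begin{align*}
V(t) \le V(0)\,\exp\!\Bigl[\tfrac{1}{2} n(0,t)\ln\mu - 2\lambda^-|\Pi_N(0,t)| + 2\lambda^+|\Pi_D(0,t)|\Bigr].
\end{align*}
Substituting Assumptions \ref{assum_A_DoS_F}-\ref{assum_A_DoS_1} together with $|\Pi_N(0,t)| = t - |\Pi_D(0,t)|$, the coefficient of $t$ in the exponent becomes $\tfrac{\ln\mu}{2\tau_D} - \tfrac{2\lambda^-(T-1)}{T} + \tfrac{2\lambda^+}{T}$. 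The design identity $\lambda^+ = (T-1)\lambda^-/2$ collapses this to $\tfrac{\ln\mu}{2\tau_D} - \tfrac{(T-1)\lambda^-}{T}$, which is strictly negative exactly under (\ref{eqn-DoS-T}). Hence $V$, and therefore $\tilde x$, decays exponentially, so $x(t) - Xv(t) \to 0$, and by (\ref{eqn-regu2}) the tracking error $e(t) = C\tilde x(t) + (CX+F)v(t) = C\tilde x(t) \to 0$, which is output regulation.

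The main technical obstacle is the switched-Lyapunov patching step: one has to track carefully how the mode-dependent functions $V_N$ and $V_D$ interact at each switching instant so that the per-cycle overhead $\mu$ enters with exponent $n(0,t)/2$ rather than $n(0,t)$. It is this pairing of off-to-on and on-to-off transitions that produces the factor of $2$ in the denominator of the threshold (\ref{eqn-DoS-T}); everything else in the argument is either dictated by Assumptions \ref{assum_A_DoS_F}-\ref{assum_A_DoS_1} or by the Riccati/Lyapunov structure already secured by Algorithm \ref{algor-PI}.
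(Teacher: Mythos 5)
Your route is essentially the paper's: the same error coordinates $\tilde x=x-Xv$, the same mode-dependent quadratic functions built from $P_k$ and $P^+$, the same average-dwell-time bookkeeping, and the same use of the design identity $\lambda^-=2\lambda^+/(T-1)$. The only structural difference is cosmetic: the paper rescales the DoS-mode function by $\mu_1=\sqrt{\lambda_M(P_k)\lambda_m(P_k)/(\lambda_M(P^+)\lambda_m(P^+))}$ so that \emph{every} switch costs the same factor $\mu_2=\sqrt{\mu}$, whereas you pair consecutive off/on and on/off transitions to get the exponent $\tfrac12 n(0,t)\ln\mu$. Note that with an odd number of transitions the unpaired jump factor can exceed $\sqrt{\mu}$; this only perturbs the constant prefactor, not the rate, but your write-up should acknowledge it.

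There is, however, one genuine flaw. By bounding $\dot V_N\le -2\lambda^- V_N$ you discard the term $-\tilde x^T(Q+K_k^TK_k)\tilde x$ supplied by the policy-evaluation equation (\ref{lyap}). With that loss, your coefficient of $t$ is $\frac{\ln\mu}{2\tau_D}-\frac{(T-1)\lambda^-}{T}$, which under (\ref{eqn-DoS-T}) is only nonpositive, not strictly negative: at the boundary case $\tau_D=\tau_D^k$ (which the theorem allows, since it requires only $\tau_D\ge\tau_D^k$) your estimate yields boundedness of $\tilde x$ rather than convergence, so output regulation is not established there, and your claim that the coefficient is ``strictly negative exactly under (\ref{eqn-DoS-T})'' is incorrect at equality. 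The paper keeps the extra margin: from (\ref{lyap}) it gets $\dot V_1\le-(2\lambda^-+c_1)V_1$ with $c_1=\lambda_m(Q)/\lambda_M(P_k)>0$, which makes the time coefficient at most $-c_1(T-1)/T<0$ for every $\tau_D\ge\tau_D^k$ and hence gives exponential decay of $\tilde x$ and of $e$. The repair is a one-line change to your normal-mode estimate; the rest of your argument stands.
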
 
								
								\begin{proof}
									The system (\ref{system_state})-(\ref{sys_exo}) in closed-loop with the controller (\ref{eqn-learn-con}) can be rewritten by
									\begin{align} \label{eqn-closedloop}
										\dot {\tilde x}(t)=\left\{
										\begin{matrix}
											{\blue A_{k}} \tilde x(t), & t\notin \mathcal{I}_s  \\
											A \tilde x(t),        & t\in \mathcal{I}_s
										\end{matrix}
										\right.
									\end{align}
									
									Define the following piecewise-quadratic Lyapunov candidate 
									\begin{align}
										V(t)=\left\{
										\begin{matrix}
											\tilde x^T(t) {\blue P_{k}} \tilde x(t):=V_1, & t\notin \mathcal{I}_s \\
											\mu_1\tilde x^T(t) P^+ \tilde x(t):=V_0,        & t\in \mathcal{I}_s
										\end{matrix}
										\right. 
									\end{align}
									where $\mu_1>0$ is to be determined.
									
										%
											{\blue When the rank condition (\ref{fullrank}) holds, it can be observed from the Lemma \ref{lemma-rank} that the matrix $P_{k}=P_k^T\succ 0$ can be uniquely solved from (\ref{myPolicy Evaluation2}) and it satisfies (\ref{lyap}).
										}
											When the communication is normal, along the trajectories of the closed-loop system (\ref{eqn-closedloop}), we have
											\begin{align} 
												\dot V_1 = &  -\tilde x^T {\blue(A_{k}^TP_{k}+P_{k}A_{k})}\tilde x \nonumber\\
												\leq & -\tilde x^T (Q+{\blue K_{k}^T K_{k}}) \tilde x-2\lambda^-\tilde x^T{\blue P_{k}}\tilde x \nonumber\\
												\leq & -\left(2\lambda^-+c_1\right) V_1, t\notin \mathcal{I}_s.  \label{eqn-derive1}
											\end{align}
											where $c_1={\lambda_m(Q)/\lambda_M({\blue P_{k}})}$.
											
											On the other hand, when the system is under DoS attacks, based on (\ref{lyap-open}),  we have 
											\begin{align}
												\dot V_0 = & \mu_1\tilde x^T (A^TP^+ + P^+ A) \tilde x \leq  2\lambda^+ V_0, t\in \mathcal{I}_s.
											\end{align}
											
											For any $l\in\mathbb{Z}_+$, consider two intervals: $[h_l+\tau_l,h_{l+1})$ for the situation where the communications are normal and $[h_l,h_l+\tau_l)$
											for the situation where the communication is denied. Then, one can observe that the Lyapunov function satisfies
											\begin{align}
												V(t)\leq \left\{
												\begin{matrix}
													e^{-(2\lambda^-+c_1) (t-h_l-\tau_l)}V(h_l+\tau_l), \ t\in[h_l+\tau_l,h_{l+1})  \\
													e^{2\lambda^+ (t-h_l)}V(h_l),         \ t\in [h_l,h_l+\tau_l)
												\end{matrix}
												\right. \nonumber
											\end{align}
											
											Based on the definition of Lyapunov functions $V_0$ and $V_1$, we have
											\begin{align}
												& \mu_1 \lambda_m(P^+) |\tilde x|^2 \leq V_0(\tilde x) \leq \mu_1 \lambda_M(P^+) |\tilde x|^2, \nonumber\\
												& {\blue \lambda_m(P_{k}) |\tilde x|^2 \leq V_1(\tilde x) \leq \lambda_M(P_{k}) |\tilde x|^2}, \quad \forall \tilde x\in\mathbb{R}^n. \nonumber
											\end{align}
											
											This implies that 
											\begin{align}
												V_1(\tilde x) & \leq \mu_2 V_0(\tilde x), \nonumber\\
												V_0(\tilde x) & \leq \mu_2 V_1(\tilde x), \quad \forall \tilde x\in\mathbb{R}^n \nonumber
											\end{align}
											where $\mu_1={\blue \sqrt{\frac{\lambda_{M}(P_{k})\lambda_{m}(P_{k})}{\lambda_{M}(P^+)\lambda_{m}(P^+)}}}$, and 
											$\mu_2={\blue \sqrt{\frac{\lambda_{M}(P_{k})\lambda_{M}(P^+)}{\lambda_{m}(P_{k})\lambda_{m}(P^+)}}}$.
											
											Therefore, $V(t)\leq \mu_2 V(t^-)$ holds on all the switching point $t$, where $t^-=\lim_{\tau\uparrow t}\tau$.
											For all $t\in\mathbb{R}_+$, the Lyapunov function $V$ satisfies
											\begin{align}
												V(t)\leq & \mu_2^{n(0,t)} e^{-(2\lambda^-+c_1)|\Pi_N(0,t)|}e^{2\lambda^+|\Pi_D(0,t)|} V(0)\nonumber\\
												= & e^{-2(\lambda^-+c_1)|\Pi_N(0,t)|+2\lambda^+|\Pi_D(0,t)|+n(0,t) \log \mu_2  } V(0). \nonumber
											\end{align}
											
											Based on the Assumptions \ref{assum_A_DoS_F}-\ref{assum_A_DoS_1}, we have 
											\begin{align}
												|n(0,t)| \leq & \eta + \frac{t}{\tau_D}, \nonumber\\
												|\Pi_D(0,t)|\leq & \kappa + \frac{t}{T} {\leq  \kappa + \frac{t}{T}},  \nonumber\\
												|\Pi_N(0,t)|= & t-|\Pi_D(0,t)| \geq t-\kappa-\frac{t}{T},  \forall t\in\mathbb{R}_+.
											\end{align}
											
											Based on the fact that $\lambda^-=\frac{2\lambda^+}{T-1}$, one has
											\begin{align}
												&-2(\lambda^-+c_1)|\Pi_N(0,t)|+2\lambda^+|\Pi_D(0,t)|+n(0,t) \log \mu_2  \nonumber\\
												\leq & -2(\lambda^-+c_1) (t-\kappa-\frac{t}{T}) \nonumber\\
												& +2\lambda^+(\kappa + \frac{t}{T})+ (\eta+\frac{t}{\tau_D})\log \mu_2 \nonumber\\
												= & c_2 + t\left(-\frac{(\lambda^-+c_1)({T}-1)}{T}+\frac{\log \mu_2}{\tau_D}\right) \nonumber\\
												\leq &  c_2 - c_3 t,\quad \forall \tau_D\geq {\blue \tau_D^k}, t\in\mathbb{R}_+
											\end{align}
											where $c_2=2\kappa (\lambda^+ + \lambda^- + c_1)+\eta\log\mu_2$, and $c_3=\frac{c_1({T}-1)}{T}$.
											
											Finally, along the trajectories of the closed-loop system (\ref{eqn-closedloop}), the Lyapunov function $V$ satisfies 
											\begin{align}
												V(t)\leq & e^{c_2 - c_3 t}V(0), \quad \forall \tau_D\geq {\blue \tau_D^k} \nonumber
											\end{align}
											implying that 
											\begin{align}  \label{eqn-xi-ISS}
												|\tilde x(t)|\leq &c_4 e^{-\frac{c_3}{2} t}|\tilde x(0)|, \nonumber\\
												{ |e(t)|} \leq & |C|c_4e^{-\frac{c_3}{2} t}|\tilde x(0)|, \quad\forall \tau_D\geq {\blue \tau_D^k}
											\end{align}
											where 
											\begin{align}
												c_4={\blue \sqrt{e^{c_2}\frac{\max\{\lambda_M(P_{k}), \mu_1\lambda_M(P^+)\}}
													{\min\{\lambda_m(P_{k}), \mu_1\lambda_m(P^+)\}}} }. \nonumber
											\end{align}
											
											We can further have $\underset{t\to\infty}{\lim}\big(x(t)-Xv(t)\big)=0$, and $\underset{t\to\infty}{\lim}e(t)=0$,
											which shows that the output regulation is achieved. The proof is completed. 
										\end{proof}
									
										\begin{remark}
											In order to implement the learned resilient optimal control policy derived from the Algorithm \ref{algor-PI}, we assume the exostate $v(t)$ is always measurable.
											If $v(t)$ is unmeasurable, one can reconstruct the exostate by the minimal polynomial of exosystem matrix $S$; see \cite{Gao2016TAC} for more details. 
										\end{remark}

										\begin{remark}
											From (\ref{eqn-DoS-T}) in the Theorem \ref{theo-res}, we observe that the lower bound of DoS frequency criterion ${\blue \tau_D^k}$ is only related to ${\blue P_{k}}$, $P^+$, $\lambda^-$, and ${T^*}$, all of which are either known or learned by Algorithm \ref{algor-PI}.
										\end{remark}

\section{Example} \label{sec-simu}
In order to validate the proposed resilient control approach, we consider the adaptive cruise control problem of an autonomous vehicle. 
We aim to keep a desired distance between the autonomous vehicle and a preceding vehicle, i.e., a zero clearance error. 
According to \cite{Li2011CST}, the system matrices are given as follows.
	\begin{align}
		A = & \begin{bmatrix}
			0 & 1 & -\tau_h \\
			0 & 0 & -1 \\
			0 & 0 & -1/T_L 
		\end{bmatrix}, 
		B = \begin{bmatrix}
			0 \\ 0 \\ -K_L/T_L
		\end{bmatrix},  \nonumber \\ 
		D = & \begin{bmatrix}
			0 & 0 & 0\\ 1 & 0 & 1  \\ 0 & 0 & 0
		\end{bmatrix},
		S =  \begin{bmatrix}
		0 & -\omega & 0\\ \omega & 0 & 0 \\ 0 & 0 & 0
		\end{bmatrix},
		\nonumber\\
		C = & \begin{bmatrix}
			1 & 0 & 0
		\end{bmatrix}, 
		F =	\begin{bmatrix}
			0 & 0 & 0
		\end{bmatrix}, \nonumber
	\end{align}
	where 
	the state $x\in\mathbb{R}^3$ consists of clearance error, velocity error and acceleration of the autonomous vehicles, and the input $u\in\mathbb{R}$ is the desired acceleration. Based on the structure of exosystem matrix $S$, we can see that the exostate $v\in\mathbb{R}^3$ includes constant and sinusoidal signals. $Dv$ is the acceleration of the preceding vehicle, implying that its acceleration profile is a combination of constant and sinusoidal signals; see Fig. \ref{fig:headway}a.

	\begin{table}[htbp]
		\centering
		\caption{\label{tab:syspara} System Parameters }
		\begin{tabular}{lll}
			\toprule
			Parameter & Physical Meaning & Value  \\
			\midrule
			$\tau_h[s]$ & Time Headway of Autonomous Vehicle& 1.5 \\
			$T_L[s]$ & Time Constant of Autonomous Vehicle & 0.45 \\
			$K_L$ & System Gain of Autonomous Vehicle & 1 \\ 
			$\omega[Hz]$ & 	Frequency of Exostates & $\pi/3$ \\
			$\eta$ & DoS Frequency Regularization & 1 \\
			$\tau_D[s]$ & DoS Frequency Criterion & 1 \\
			$\kappa[s]$ & DoS Duration Regularization & 0.5 \\
			$T^*$ & Lower Bound of DoS Duration Criterion & 2 \\
			\bottomrule
		\end{tabular}
	\end{table}

	The physical meanings and values of vehicle parameters and DoS-related parameters are illustrated in the Tab. \ref{tab:syspara}. 
	By the collection of online state and input data during $t\in[0,2]s$ when an exploration noise $\xi(t)=0.1\sin(6\pi t)+0.1\sin(12\pi t)$ is introduced to the system input, we implement the PI Algorithm \ref{algor-PI} to learn the resilient optimal control gain and find that the convergence is achieved as the iteration $k=19$. 
	The approximated feedback control gain, feedforward control gain, and value learned by PI are
	\begin{align}
		K=& \begin{bmatrix}
			19.583688, -9.1462677, -4.8707726
		\end{bmatrix}, \nonumber\\
	    P=& \begin{bmatrix}
	    	 185.66245 & -182.53452 & -8.8126596 \\
	    	-182.53452 &  219.34262 &  4.1158205 \\
	    	-8.8126596 &  4.1158205 &  2.191847
	    \end{bmatrix}, \nonumber \\
    	L=& \begin{bmatrix}
    	   -5.8656,   -8.7376,  -19.5893
    	\end{bmatrix}. \nonumber
	\end{align}
	
	For reference, we calculate the resilient optimal control gain, feedforward control gain, and value through solving the ARE (\ref{eqn-ARE2}) and regulator equations (\ref{eqn-regu1})-(\ref{eqn-regu2}) as follows
	\begin{align}
		K^{*}=& \begin{bmatrix}
			19.412827, -9.165516, -4.7660936
		\end{bmatrix}, \nonumber\\
		P^{*}=& \begin{bmatrix}
			183.42892 & -180.67875 & -8.735772 \\
			-180.67875 &  217.68209 & 4.1244822 \\
			 -8.735772 &  4.1244822 & 2.1447421 
		\end{bmatrix}, \nonumber\\
	    L^*=& \begin{bmatrix}
		   -5.8414,   -8.7045,  -19.5144
	\end{bmatrix}. \nonumber
	\end{align}

	{\blue Fig. \ref{fig:K_2024} depicts the difference between the optimal control gain $K^*$ and the learned control gain $K_k$ at each iteration. }	
	We introduce two controllers as a comparison of performance. The first one is an internal-model-based controller \cite{huangjiebook}
	\begin{align} \label{eq-internal}
		   u(t)= &  -K_x x(t)-K_z z(t), \nonumber\\
			\dot z(t)= & S z(t) + G_2 e(t).
	\end{align}
	where $G_2$ is selected such that the pair $(S,G_2)$ is controllable and control gains $K_x$, $K_z$ are chosen such that the matrix $\begin{bmatrix}
		A-BK_x & -BK_z \\ G_2 C & S
	\end{bmatrix}$ is Hurwitz.

	The second one is a variant of model-based resilient controller developed in \cite{FENG2017Auto}
	\begin{align} \label{eq-model-based-resilient}
	\left\{  \begin{matrix}
		u(t) =& -K^* \hat x(t), & \\
		\dot {\hat x}(t) =& A \hat x(t)+Bu(t), & t\in \mathcal{I}_s \\ 
		\hat x(t) =& x(t), & t\notin \mathcal{I}_s.
		\end{matrix}   
	\right.
	\end{align}
    
	We apply the proposed resilient optimal controller, internal-model-based controller, and model-based controller after $t=2s$, respectively.  
	The simulation results are shown in Fig. \ref{fig:headway}b where the shaded area refers to the DoS attack intervals. 
	One can see that the proposed control method outperforms the other two in both transient and steady-state.
	This is because the internal-model-based controller (\ref{eq-internal}) is not aware of DoS attacks, and the state $x(t)$ and tracking error $e(t)$ are not available as feedback when the system is under DoS attacks, although internal model principle is well known to address robust output regulation problems.
	Resilient controller (\ref{eq-model-based-resilient}) has taken the effect of DoS attacks into consideration. However, it can be only ensured that all the signals of the closed-loop system remain bounded for bounded disturbances, which is not enough to achieve asymptotic tracking.

	{\blue We further determine the bound $\tau_D^{19}=14.59s$ based on (\ref{eqn-DoS-T}). }
	However, the DoS frequency criterion $\tau_D=1s$ can be much smaller than the bound in the simulation, which implies that the closed-loop system can defend more frequent attacks than we expect.	

	\begin{figure}[t]
	\centering
	\includegraphics[width=\linewidth]{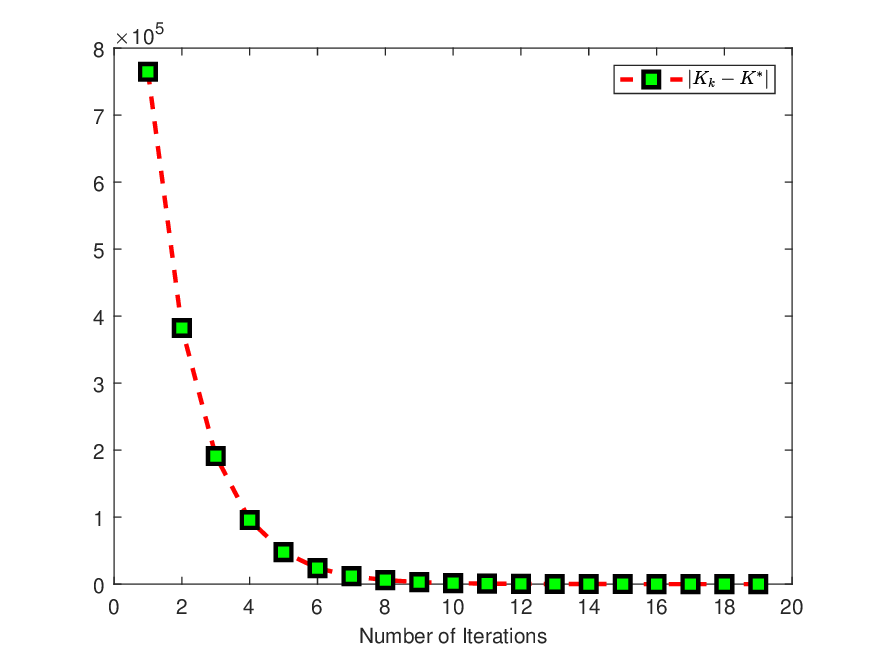}
	\caption{\blue Comparison between the optimal control gain $K^*$ and the learned control gain $K_k$ at each iteration}
	\label{fig:K_2024}
	\end{figure}

	\begin{figure}[t]
		\centering
		\includegraphics[width=\linewidth]{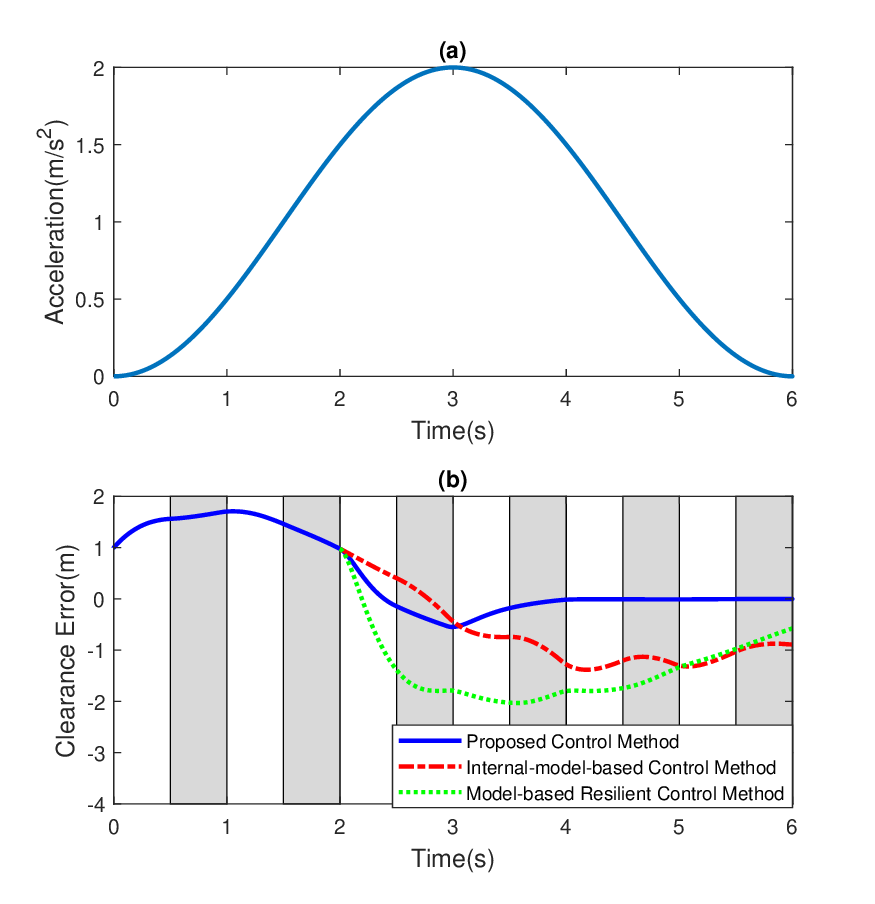}
		\caption{Simulation results: (a) Acceleration profile of preceding vehicle, (b) Clearance error of autonomous vehicle using proposed control method, internal-model-based control method, and model-based resilient control method. }
		\label{fig:headway}
	\end{figure}

	\section{Conclusion} \label{sec-conclusion}
	
	In this paper, a learning-based resilient optimal control framework has been constructed using recent developments in reinforcement learning and ADP. 
	Under this framework, we first estimate the divergence rate of the system under 
	DoS attacks. Based on the estimated rate and a lower bound of DoS
	duration, we determine a desired convergence rate for the closed-loop
	system with normal communication. 
	Then, we apply ADP methods to learn a resilient optimal control policy under the desired rate.
	Finally, through rigorous theoretical analysis and numerical simulations, we have shown that the control system has improved resilience against cyberattacks with the learned adaptive controller.

	\bibliographystyle{IEEETranS}
	\bibliography{myreference}           
	
\end{document}